\definecolor{darkblue}{rgb}{0,0,.5}
\theoremstyle{plain}
\newtheorem{theorem}{Theorem}[section]
\newtheorem{proposition}[theorem]{Proposition}
\theoremstyle{definition}
\newtheorem{remark}[theorem]{Remark}
\def\R{\mathbb{R}}
\def\F{\mathcal{F}}
\newcommand{\longto}{\longrightarrow}
\def\vv<#1>{\langle#1\rangle}
\def\ww<#1>{\langle\langle#1\rangle\rangle}
\newcommand{\dd}[2]{\mbox{$\frac{\partial #2}{\partial #1}$}}
\providecommand{\del}{\partial}
\newcommand{\Om}{\Omega}
\newcommand{\eps}{\varepsilon}
\newcommand{\by}[2]{\mbox{$\frac{#1}{#2}$}}
\newcommand{\cinf}{\mbox{$C^{\infty}$}}
\providecommand{\set}[1]{\mbox{$\{#1\}$}}
\newcommand{\gu}{\mathfrak{g}}
\newcommand{\ad}{\mbox{$\text{\upshape{ad}}$}}
\newcommand{\X}{\mbox{$\mathcal{X}$}}
\newcommand{\ds}{\mbox{$\,\delta_t$}}
\newcommand{\revise}[1]{#1}
\title[HIPS for compressible flow]{%
%
A Hamiltonian interacting particle system for compressible flow
}
\author{Simon Hochgerner}
\address{Finanzmarktaufsicht (FMA),
Otto-Wagner Platz 5, A-1090 Vienna
}
\email{simon.hochgerner@fma.gv.at} 
\begin{document}

\begin{abstract}
The decomposition of the energy of a compressible fluid parcel into slow (deterministic) and fast (stochastic) components is interpreted as a stochastic Hamiltonian interacting particle system (HIPS). It is shown that the McKean-Vlasov equation associated to the mean field limit yields the barotropic Navier-Stokes equation with density dependent viscosity. Capillary forces can also be treated by this approach. Due to the Hamiltonian structure the mean field system satisfies a Kelvin circulation theorem along stochastic Lagrangian paths. 
\end{abstract}

\maketitle

\section{Introduction} 
\subsection{The barotropic Navier-Stokes equations}
Consider a compressible barotropic fluid in an $n$-dimensional domain with periodic boundary conditions. The velocity, $u=u(t,x)$, and density, $\rho=\rho(t,x)$, are a time-dependent vector field and function, respectively, defined on the torus $M=\R^n/\mathbb{Z}^n$. 

The compressible Navier-Stokes equations with density dependent viscosity and capillary forces are
\begin{align}
\label{intro:bNS1}
    \dot{u}
    &=   
    - \nabla_{u}u
    - \rho^{-1}\nabla p
    + \rho^{-1} \Big(\textup{div}\,S  + \textup{div}\,C\Big) \\
\label{intro:bNS2}
    \dot{\rho}
    &= 
    -\textup{div}(\rho u)
\end{align}
where $\nabla_u u = \vv<u,\nabla>u = \sum u^j\del_j u$.  
The hydrostatic pressure, $p$, is assumed to be given in terms of the density, that is $p = \rho^2\mathcal{U}'(\rho)$ for a known function $\mathcal{U}$ which models the specific internal energy when the fluid is in equilibrium. 
Further, $S$ is the stress tensor, defined by 
\begin{equation}
    \label{intro:stress}
    S_{ij}
    = \nu{\rho}\Big(\del_i u^j + \del_j u^i\Big)
\end{equation}
where $\nu\ge0$ is the viscosity coefficient.  The corresponding force is   
\begin{align}
    \textup{div}\,S
    = \sum \del_i S_{ij} e_j 
    = 
    \nu (\nabla^{\top}u)\nabla{\rho} 
    + \nu {\rho}\nabla\textup{div}(u)
    + \nu \nabla_{ \nabla{\rho} }u
    + \nu{\rho}\Delta u.
\end{align}
Let $\kappa\ge0$ be a constant. The capillary tensor, $C$, is defined as 
\begin{equation}
    \label{intro:Cap}
    C 
    = \kappa\Big(
        \Big( {\rho}\Delta{\rho}
            +\by{1}{2}\vv<\nabla{\rho},\nabla{\rho}> \Big)\mathbb{I}
        - \nabla{\rho}\otimes\nabla{\rho} \Big)
\end{equation}
and satisfies $\textup{div}\,C = \kappa{\rho}\nabla\Delta{\rho}$. 

For background regarding the barotropic Navier-Stokes equations with viscosities which depend linearly on the density we refer to \cite{MV07} and references therein.

The capillary tensor~\eqref{intro:Cap} appears in this form also in \cite[Equ.~(4)]{AMW97} and goes back to Korteweg~\cite{K01}. Analytic aspects of the barotropic system with capillary forces \eqref{intro:bNS1}-\eqref{intro:bNS2} are treated in \cite{BDL03}.
Navier-Stokes equations with more general third-order spatial derivative terms are discussed in \cite{J11}.

\subsection{Description of results}
This paper is concerned with a mean field representation of solutions to the compressible Navier-Stokes system \eqref{intro:bNS1}-\eqref{intro:bNS2}, and this mean field is derived from a stochastic Hamiltonian interacting particle system (\revise{Hamiltomian IPS or} HIPS).
The HIPS picture follows from a combination of a many particle approach to fluid dynamics and a decomposition of the energy into slow (deterministic) and fast (stochastic) components. 
\revise{
The interpretation as a slow-fast decomposition is consistent with the multi-time formulation of \cite{CGH17}, where it is shown that advection along stochastic transport fields in the Eulerian representation of stochastic fluid dynamics can be obtained by homogenization.  
}

To describe the many particle approach, consider a tiny Eulerian volume (fixed in space and time) $\Delta V$ which is divided into a very large number $N$ of equal subvolumes. It is assumed that the continuum hypothesis holds in each of the infinitesimal subvolumes $\Delta V^{\alpha}$. Thus there is a mass density $\rho^{\alpha}$ for each grid index $\alpha$. Since the subvolumes are equal, so are the initial conditions for $\rho^{\alpha}$, and $\rho^{\alpha}|_0 = \rho|_0$ which is the initial condition for the overall mass density in $\Delta V$.   

Now, the fluid parcels in all of the subvolumes interact because the deterministic component of the energy of the blob of fluid in $\Delta V$ depends on the total momentum and the overall mass density. (`Momentum' shall always refer to momentum per unit volume, i.e.\ its dimension is density times velocity.) On the other hand, molecular diffusion is modeled as a set of $N$ independent (multidimensional) Brownian motions such that all $N$ individual parcels undergo their own stochastic process. Since molecules are incompressible these processes are set up as stochastic perturbations along divergence free vector fields. 

Hence the fluid parcel in $\Delta V$ consists of $N$ identical subparcels, and each subparcel follows the flow of the ensemble of subparcels (is dragged along or advected) but also undergoes its own diffusion. Heuristically, this means that the interaction in the IPS is due to the deterministic part of the motion where each infinitesimal subparcel follows the mean flow of the ensemble.   

The corresponding total Hamiltonian \eqref{e:H^N} for the fluid in $M$ is then obtained by integrating the energies over the infinitesimal subvolumes $\Delta V^{\alpha}$ and summing over all indices $\alpha$. 
Since this Hamiltonian describes a system consisting of $N$ subparcels, it is a function 
\begin{equation*}
    H^N:
    \Big(T^*(\textup{Diff}(M)\circledS\mathcal{F}(M)\Big)^N
    \to\R
\end{equation*}
which is the $N$-fold product of the phase space of compressible fluid mechanics. (The semi-direct product notation is explained in Section~\ref{sec:SD}.)
The canonical symplectic structure on the phase space then yields a Hamiltonian system of Stratonovich SDEs by adapting the construction of \cite{LCO08} to the infinite dimensional setting. This system of $N$ interacting SDEs will be called the `HIPS equations of motion'.  
In Section~\ref{sec:3A} it is explained how the Hamiltonian $H^N$ is a sum of terms involving: translational kinetic energy of the particle ensemble, equilibrium internal energy associated to the hydrostatic pressure, equilibrium internal energy associated to capillary forces, non-equilibrium internal energy due to expansion/compression along the flow, and stochastic energy associated to molecular bombardment.     
The HIPS equations of motion are derived in Section~\ref{sec:3b}, see \eqref{e:hips3}-\eqref{e:hips5}. Under the assumption that the mean field limit exists, as $N\to\infty$, the stochastic mean field equations are obtained in Section~\ref{sec:3c}. 

The mean field SDEs \eqref{e:Hmf1}-\eqref{e:Hmf2} represent the Eulerian description of the motion of a fluid parcel associated to a subdivision $\Delta V^{\alpha}$ for very large $N$. The expected fluid flow is then obtained by averaging over momenta and mass densities of all the smaller fluid parcels. But these are just the mean fields. It therefore remains to calculate the equations of motions for the latter. These equations are deterministic and given in \eqref{e:exp1}-\eqref{e:exp2}. 

In Section~\ref{sec:3d} it is shown that, if the stochastic perturbation corresponds to a Brownian motion for each Fourier mode (i.e., is a cylindrical Wiener process in the space of solenoidal vector fields), then the momentum and mass density mean fields solve the compressible Navier-Stokes system~\eqref{intro:bNS1}-\eqref{intro:bNS2}. 
This is the content of Theorem~\ref{thm:cNS}. Moreover, since the HIPS is invariant under the particle relabeling symmetry (invariance with respect to the group of diffeomorphisms) Noether's Theorem implies that a Kelvin Circulation Theorem holds along stochastic Lagrangian paths (Proposition~\ref{prop:KCT}). 

The density dependence of the viscosity, as manifest by the $\rho$-factor in \eqref{intro:stress}, is a consequence of the form of the Hamiltonian~\eqref{e:H^N}. The compressible Navier-Stokes equations are often expressed with respect to a density-independent viscosity, but it is not clear how to realize this independence in the HIPS framework. 

Mean field representations of solutions to the incompressible Navier-Stokes equation have been previously obtained by \cite{CI05,I06b} by considering a Weber functional along stochastic Lagrangian paths. In \cite{H17,H18} the above described Eulerian (H)IPS formulation has been used to also obtain a stochastic mean field representation for solutions to the incompressible Navier-Stokes equation. 

To the best of my knowledge, Theorem~\ref{thm:cNS} is the first stochastic representation of solutions to the compressible Navier-Stokes equations.  

\revise{Related approaches to fluid dynamics from the perspective of stochastic variational principles include \cite{Y83,CC07,C}, which characterize solutions to the incompressible Navier-Stokes equations by variational principles for stochastic Lagrangian paths, and generalizations in \cite{ACC14,CCR15}.
}

\subsection{Applications: NatCat models and Solvency~II capital requirements}
Uses of the Navier-Stokes equation range from semi-conductor engineering to astrophysics and it is not the goal of this section to attempt a review of these topics. Rather, I want to briefly describe an application where the interaction between academia and industry is perhaps not very well established. This concerns models of natural perils (NatCat models) that are used in the insurance industry to calculate risk capital requirements. These risk capital requirements are a determining factor for the solvency of a given company. In the EU the relevant regulatory framework is called Solvency~II (\cite{Level1}). There exist different NatCat models, which are in general proprietary, for natural disasters such as earthquakes, flooding, tropical cyclones, and extratropical cyclones (European winter storms). Storm models are often based on numerical weather prediction (NWP) systems, and therefore inherit all their advantages and flaws.  

\revise{However, it is not claimed that the stochastic HIPS formulation of this paper is appropriate to generate stochastic NatCat storm scenarios. Thus the material in this section  is  only intended as additional background information concerning possible applications of stochastic fluid mechanics, and it is logically independent from the main result, Theorem~\ref{thm:cNS}.
}

\subsubsection{Numerical weather prediction (NWP) and climate modeling}
NWP systems and climate models are important for a number of apparent reasons such as daily weather forecasts or climate change quantification.  

Geophysical flows are modeled by the compressible Navier-Stokes equations (without capillary term, i.e.\ $C=0$ in \eqref{intro:bNS1}). These equations are deterministic. However, any implementation of these equations needs to introduce temporal and spatial discretizations. Physical processes which  occur below these chosen grid scales (`subgrid phenomena') cannot be accounted for by any numerical model of the deterministic flow equations. Since the subgrid processes are inherently unknown and uncertain it seems reasonable to model these by a stochastic dynamics point of view. Indeed, such a position has been adopted quite early by Kraichnan~\cite{K68}. See also \cite{MR06} for a modern version and fluid equations with stochastic force terms. 

Reviews regarding NWP systems and climate models are contained in \cite{BTB15,Betal17,P19}. These also address the need for stochastic parameterizations of unknown subgrid processes. 

Recent advances in the stochastic modeling of geophysical flows include the `location uncertainty' approach of M\'emin, Resseguier and collaborators \cite{Mem14,RMC17,RMC17a,RMC17b} as-well as the `stochastic advection by Lie transport (SALT)' theory of Holm and collaborators \cite{Holm15,CGH17,CFH17,DHL20,ABHT20}. The SALT approach is based on the observation that subgrid phenomena represent unknown physical processes and should therefore be derived from a stochastic variational principle. As a consequence, these models preserve circulation along stochastic Lagrangian paths.

\subsubsection{NatCat models and solvency capital requirement (SCR)}
With the implementation of the Solvency~II regulatory regime (\cite{Level1}) per 1.\ January 2016, applied insurance mathematics has become a surprisingly diverse and multi-disciplinary subject. The relevant tools extend beyond classical actuarial science to, e.g., modeling of local general accounting principles (\cite{HG19,GHL20}), no-arbitrage principles and stochastic interest rate models (\cite{TW16,DHOST20}), and stochastic fluid dynamics. The significance of the last point in this (certainly incomplete) list is briefly explained below.   
  
One of the basic quantitative principles of Solvency~II can be summarized as follows: Insurance and reinsurance undertakings are required to quantify all relevant risk factors over a one-year horizon and derive a corresponding loss distribution. Now, the own funds (i.e.\ excess of assets over liabilities) have to cover the $99.5$ percentile of this distribution (`survival of the $200$ year event'). This $99.5$ percentile corresponds to the so-called solvency capital requirement (SCR). The SCR can be calculated from a prescribed standard formula or a company specific internal model. Medium and large sized companies generally use internal models. If a company has chosen the internal model approach and covers windstorm risks, then the corresponding loss distribution for a one-year period has to be derived. To do so the following three-step procedure, or a variant of it, is used (\cite{GK05}):
\begin{enumerate}
    \item 
    Hazard module: physical model. A large set of windstorm scenarios is generated. These are the so-called stochastic scenarios of the NatCat model.
    \item
    Vulnerability module. 
    This step quantifies the vulnerability (i.e., damage done) of the insured structures for each of the stochastic scenarios.   
    \item
    Financial loss module.
    Finally, structural damage is converted to loss by taking into account contract specifics (e.g., sum insured) for each damaged structure and, possibly, reinsurance.  
\end{enumerate}

Since the relevant time period for NatCat models is one year they operate on a new temporal scale when compared to short term weather forecast models and medium or long term climate models. The most advanced NatCat models are based on a coupling of a global circulation model (GCM) and a NWP system. Since these models are proprietary it is not possible to cite a suitable model documentation. However, \cite{C02} contains a review of the basic method, which is still quite up to date. In particular, it is explained how (deterministic) NWP systems are used to generate a set of scenario events by statistically sampling the initial conditions. 

\begin{quote} 
``Using NWP technology, a large set of potential future storms is generated
by taking data sets comprising the initial pressure fields of historical storms, perturbing them
both temporally and spatially, and moving them forward in time through the application of a
set of partial differential equations governing fluid flow. The resulting event set is rigorously
tested to ensure that it provides an appropriate representation of the entire spectrum of
potential storm experience -- not just events of average probability, but also the extreme
events that make up the tail of the loss distribution.''
(\cite{C02})
\end{quote}

This point will be taken up again in Section~\ref{sec:concl}. 

A recent reanalysis of European winterstorm events, which also highlights the potential financial risks for the insurance sector, has been carried out by \cite{Hetal15}.


\section{Notation and preliminaries}\label{sec:nota}

\subsection{Diffeomorphism groups}\label{sec:diffgps}
Let
$M=T^n=\R^n/\mathbb{Z}^n$. We fix $s>1+n/2$ and let $\textup{Diff}(M)^s$ denote the infinite dimensional $\cinf$-manifold of $H^s$-diffeomorphisms on $M$. 
Further, $\textup{Diff}(M)^s_0$ denotes the submanifold of volume preserving diffeomorphisms of Sobolev class $H^s$. Both, $\textup{Diff}(M)^s$ and $\textup{Diff}(M)^s_0$, are  topological groups but not Lie groups since left composition is only continuous but not smooth. Right composition is smooth.
The tangent space of $\textup{Diff}(M)^s$ (resp.\ $\textup{Diff}(M)^s_0$) at the identity $e$ shall be denoted by $\gu^s$ (resp.\ $\gu^s_0$). 
Let $\X^s(M)$ denote the vector fields on $M$ of class $H^s$ and $\X_0^s(M)$ denote the subspace of divergence free vector fields of class $H^s$.
We have $\gu^s_0 = \X^s_{0}(M)$ and $\gu^s=\X^s(M)$.
The superscript $s$ will be dropped from now on.  

We use right multiplication $R^g: \textup{Diff}(M)\to \textup{Diff}(M)$, $k\mapsto k\circ g = kg$ to trivialize the tangent bundle $T\textup{Diff}(M)\cong \textup{Diff}(M)\times\gu$, $\xi_g\mapsto(g,(TR^g)^{-1}\xi_g)$, and similarly for $\textup{Diff}(M)_0$. 

The Riemannian metric on $\textup{Diff}(M)\times\gu\cong T\textup{Diff}(M)$ is defined by
\[
  \ww<\xi_g,\eta_g> 
  = \int_M\vv<\xi(g(x)),\eta(g(x))>\, dx
\]
for $\xi,\eta\in\gu$, where $dx$ is the standard volume element in $M$, and $\vv<.,.>$ is the Euclidean inner product. 
See \cite{AK98,EM70,MEF,Michor06} for further background.

\subsection{Derivatives}
The adjoint with respect to $\ww<.,.>$ to the Lie derivative $L$, given by $L_X Y = \nabla_X Y - \nabla_Y X$, is 
\begin{equation}
    L^{\top}_X Y
    = 
    -\nabla_X Y - \textup{div}(X)Y - (\nabla^{\top}X)Y
\end{equation}
with
$
    \nabla_X Y 
    = \vv<X,\nabla>Y 
    = \sum X^i\del_i Y^j e_j 
$
and
$
    (\nabla^{\top}X)Y
    = \sum (\del_i X^j)Y^j e_i
$
with respect to the standard basis $e_i$, $i = 1,\dots,n$.
The notation $\ad(X)Y = [X,Y] = -L_X Y$ and $\ad(X)^{\bot} = -L^{\top}_X$ will be used. 
 
The variational derivative of a functional $F: \gu\to\R$ will be denoted by $\delta F/\delta X$, that is 
\begin{equation}
    \ww<\frac{\delta F}{\delta X},Y>
    = \frac{\del}{\del t}\Big|_0 F(X + tY)
\end{equation}
for $X,Y\in\gu$. 

\subsection{Semi-direct product structure}\label{sec:SD}
The configuration space of compressible fluid mechanics on $M$ is the semi-direct product $\textup{Diff}(M)\circledS\mathcal{F}(M)$ where $\mathcal{F}(M)$ denotes functions (also of Sobolev class $s$) on $M$. The semi-direct product structure is defined by the right action 
\begin{equation}
    R^{(\phi,g)}(\psi,f)
    = (\psi\circ\phi, f\circ\phi + g). 
\end{equation}
The corresponding phase space is trivialized with respect to this right multiplication as 
\[
 T^*(\textup{Diff}(M)\circledS\mathcal{F}(M)) \cong \textup{Diff}(M)\circledS\mathcal{F}(M) \times \gu^*\times\F(M)^*.
\]
The variables  $\mu\in\gu^*$ and $\rho\in\F(M)^*$ represent momentum and mass density, respectively. 
Making use of the Euclidean volume form $dx$, the duals can be identified as $\gu^*=\Om^1(M)$ and $\F(M)^*=\F(M)$. 
Details on Hamiltonian mechanics on semi-direct products are given in \cite{MRW84}, where also the case of compressible ideal fluids is treated.

\subsection{Stochastic dynamics}
Let $(\Om,\F,(\F_t)_{t\in[0,T]},P)$ be a filtered probability space satisfying the usual assumptions as specified in \cite{Pro}. In the following, all stochastic processes shall be understood to be adapted to this filtration.
The symbol 
\begin{equation}
    \ds 
\end{equation}
will be used to denote the Stratonovich differential to distinguish it from the variational derivative $\delta$. The Ito differential does not appear in this paper. The exterior differential is $d$.  

\subsection{Brownian motion in $\gu_0$}\label{sec:BMgu}
Let
\[
 \mathbb{Z}_n^+
 := \set{k\in\mathbb{Z}_n: k_1>0
 \textup{ or, for } i=2,\ldots,n,
 k_1=\ldots=k_{i-1}=0, k_i>0
 }.
\]
For $k\in\mathbb{Z}^+_n$ let $k_1^{\bot},\ldots,k_{n-1}^{\bot}$ denote a choice of pairwise orthogonal vectors in $\R^n$ such  that $|k_i^{\bot}|=|k|$ and $\vv<k_i^{\bot},k>=0$ for all $i=1,\ldots,n-1$. 

Consider the following system of vectors in $\gu_0$ (\cite{CM08,CS09}):
\begin{equation*}
    A_{(k,i)} =  \frac{1}{|k|^{s+1}}\cos\vv<k,x>k^{\bot}_i,\;
    B_{(k,i)} = \frac{1}{|k|^{s+1}}\sin\vv<k,x>k^{\bot}_i,\;
    A_{(0,j)} = e_j 
\end{equation*}
where $e_j\in\R^n$ is the standard basis and $s$ is the Sobolev index from Section~\ref{sec:diffgps}.  
By slight abuse of notation we identify these vectors with their corresponding right invariant vector fields on $\textup{Diff}(M)_0$. 

Further, in the context of the $\zeta_r$ vectors we shall make use of the multi-index notation $r = (k,i,a)$ where $k\in\mathbb{Z}_n^+$ and $a=0,1,2$ such that
\begin{align*}
    \zeta_r &= A_{(0,i)}
    \textup{ with } i=1,\ldots,n
    \textup{ if } a=0\\
    \zeta_r &= A_{(k,i)} 
        \textup{ with } i=1,\ldots,n-1
   \textup{ if } a=1\\
    \zeta_r &= B_{(k,i)} 
        \textup{ with } i=1,\ldots,n-1
    \textup{ if } a=2  
\end{align*}
Thus by a sum over $\zeta_r$ we shall mean a sum over these multi-indices, and this notation will be used throughout the rest of the paper. 

It can be shown (see \cite[Appendix]{CS09} for details) that the $\zeta_r$ form an orthogonal system of basis vectors in $\gu_0$, such that 
\begin{equation}\label{e:nablaXX} 
 \nabla_{\zeta_r}\zeta_r = 0
\end{equation} 
and, for $X\in\X(M)$,
\begin{equation}
\label{e:Delta}
 \sum \nabla_{\zeta_r}\nabla_{\zeta_r}X = c^s\Delta X
\end{equation}
where $c^s = 1+\frac{n-1}{n}\sum_{k\in\mathbb{Z}_n^+}\frac{1}{|k|^{2s}}$ is a constant and $\Delta$ is the vector Laplacian.

\begin{proposition}[\cite{CC07,CM08,C,DZ}]\label{prop:bm}
Let $W_t = \sum \zeta_r W_t^p$, where $W_t^r$ are independent copies of Brownian motion in $\R$. Then $W$ defines (a version of) Brownian motion (i.e., cylindrical Wiener process) in $\gu_0$. 
\end{proposition}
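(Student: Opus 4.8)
The plan is to read the series as the standard spectral (Karhunen--Lo\`eve) construction of a Wiener process on a Hilbert space and to reduce the claim to the abstract criterion of Da Prato--Zabczyk \cite{DZ}. First I would record that $\set{\zeta_r}$ is an orthogonal family for the metric $\ww<.,.>$ and is complete in $\gu_0$: orthogonality is immediate because distinct Fourier modes are $L^2$-orthogonal and, within a fixed wavenumber $k$, the vectors $\cos\vv<k,x>k^\bot_i$ and $\sin\vv<k,x>k^\bot_i$ are mutually orthogonal (here $\vv<k^\bot_i,k>=0$ is used), while completeness is the content of \cite[Appendix]{CS09}. Setting $\lambda_r=\ww<\zeta_r,\zeta_r>$ and $f_r=\zeta_r/\sqrt{\lambda_r}$, the process reads $W_t=\sum\sqrt{\lambda_r}\,f_r\,W^r_t$ with $\set{f_r}$ orthonormal, exhibiting $W$ as the candidate $Q$-Wiener process whose covariance is the symmetric nonnegative operator $Q\xi=\sum\ww<\zeta_r,\xi>\zeta_r$.

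The decisive quantitative step is to show that $Q$ has finite trace for $\ww<.,.>$. Using $|k^\bot_i|=|k|$ and $\int_M\cos^2\vv<k,x>\,dx=\tfrac12$ one gets $\ww<A_{(k,i)},A_{(k,i)}>=\tfrac12|k|^{-2s}$, and likewise for the $B_{(k,i)}$, while the $n$ constant modes $A_{(0,j)}$ contribute $n$; hence $\sum_r\lambda_r=n+(n-1)\sum_{k\in\mathbb{Z}_n^+}|k|^{-2s}$, which is finite exactly because the Sobolev assumption $s>1+n/2$ forces $2s>n$. (This is the same summability that produces the constant $c^s$ in \eqref{e:Delta}.) With trace-class $Q$ in hand I would invoke \cite{DZ}: the partial sums $W^{(N)}_t=\sum_{|r|\le N}\zeta_r W^r_t$ are continuous martingales and, by orthogonality, $\mathbb{E}\,\ww<W^{(N)}_t-W^{(M)}_t,W^{(N)}_t-W^{(M)}_t>=t\sum_{M<|r|\le N}\lambda_r$, so Doob's maximal inequality upgrades this to convergence in $L^2(\Omega;C([0,T];H))$ for the $\ww<.,.>$-completion $H$ of $\gu_0$. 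Each $W^{(N)}$ has centred, independent, stationary Gaussian increments with covariance $(t-s)Q$, and these properties pass to the limit, so $W$ is a $Q$-Wiener process; the description as a cylindrical process on $\gu_0$ is the statement that the underlying noise is modelled on the orthonormal system $\set{f_r}$.

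I expect the main obstacle to be bookkeeping of the two Hilbert structures rather than any single estimate. The noise is indexed by $\gu_0=\X^s_0(M)$, where the $\zeta_r$ are (up to uniformly bounded rescaling) orthonormal for the $H^s$ inner product, so in that topology the series is genuinely cylindrical and does \emph{not} converge; convergence of $\sum\zeta_r W^r_t$ holds only in the weaker $L^2$-type space into which $\gu_0$ embeds Hilbert--Schmidt. Checking this Hilbert--Schmidt embedding is precisely the summability $\sum_{k\in\mathbb{Z}_n^+}|k|^{-2s}<\infty$ recorded above, so the whole argument hinges on the threshold $s>1+n/2$; everything else --- orthogonality, Gaussianity, independence of increments, and path continuity --- is inherited termwise from the independent real Brownian motions $W^r_t$.
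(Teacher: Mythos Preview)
The paper does not supply its own proof of this proposition: it is stated with attribution to \cite{CC07,CM08,C,DZ} and used as an imported fact, so there is no in-paper argument to compare your sketch against. Your approach is the standard spectral construction and is correct. The key quantitative step---finiteness of $\sum_{k\in\mathbb{Z}_n^+}|k|^{-2s}$ under the standing hypothesis $s>1+n/2$---is exactly the summability that produces the constant $c^s$ in \eqref{e:Delta}, and your closing remark that the series is genuinely cylindrical for the $H^s$ topology on $\gu_0$ while converging as a trace-class $Q$-Wiener process only in the weaker $L^2$ completion is the right way to reconcile the phrase ``cylindrical Wiener process in $\gu_0$'' with the convergence you establish.
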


\section{HIPS for compressible flow}\label{sec:hips}

This section is concerned with the HIPS equations of motion and the mean field limit. Background on mean field theory can be found in \cite{S91,AD95,DV95,M96,JW17}. 

\subsection{The Hamiltonian}\label{sec:3A}
Consider a barotropic fluid in $M = \R^n/\mathbb{Z}^n$. At a (macroscopic) position $x\in M$ consider a tiny volume element $\Delta V_x$. Suppose $\Delta V_x$ is further divided into $N$ infinitesimal volume elements, $\Delta V_x^{\alpha}$, labeled by $\alpha = 1,\dots,N$, which are all assumed to have identical dimensions.   
In each volume element there is a mass density $\rho^{\alpha}=\rho^{\alpha}(x)$. Thus it is assumed that the continuum description of the fluid also holds at the level of the subdivsion. 
Let the fluid element in volume $\Delta V_x^{\alpha}$ have velocity $v^{\alpha}(x)$. 

Thus the energy of the particle ensemble in the total infinitesimal volume is determined by the velocities, $v^{\alpha}$, and densities, $\rho^{\alpha}$, in the subdivisions. To arrive at the total energy \eqref{e:H^N} we consider five contributions:
\begin{enumerate}
    \item 
    Translational kinetic energy;
    \item
    Equilibrium internal energy $\mathcal{U}$ which gives rise to the hydrostatic pressure $p$;
    \item
    Equilibrium capillary energy;
    \item
    Non-equilibrium expansion/compression energy;
    \item
    Stochastic energy due to molecular bombardment;
\end{enumerate}

\subsubsection{Translational kinetic energy}
The total velocity at $x$, $v(x)$, is the weighted average
\begin{equation}
\label{e:v}
    v(x)
    = 
    \frac{\sum_{\alpha=1}^N \rho^{\alpha}(x)v^{\alpha}(x) }{\sum_{\alpha=1}^N \rho^{\alpha}(x)}.
\end{equation}
The total momentum, $\mu(x)$, is therefore 
\begin{equation}
    \mu(x)
    = \sum_{\alpha}  \rho^{\alpha}(x)v^{\alpha}(x) 
    = \sum_{\alpha} \mu^{\alpha}
\end{equation}
and the translational kinetic energy $K^N(x)$ of the particle system is
\begin{equation}
\label{e:E1}
    K^N(x)
    = \frac{1}{2}\vv< \mu(x), v(x)> 
    = \frac{1}{2}
        \Big\langle 
        \sum_{\alpha} \mu^{\alpha} (x) , 
            \frac{ \sum_{\beta}\mu^{\beta}(x) }
                {\sum_{\nu}\rho^{\nu}(x)}
            \Big\rangle 
\end{equation}

\subsubsection{Equilibrium internal energy} 
Notice that the overall mass density in $\Delta V_x$ is expressed, in this subdivision picture, as 
$\rho(x) = \sum\rho^{\alpha}(x)/N$. Consequently, the mass contained in $\Delta V_x$ is 
$\rho(x)\cdot\Delta V_x 
= \sum\rho^{\alpha}\,dx$ where $\Delta V_x^{\alpha}\approx\Delta V_x/N$ is identified with the infinitesimal volume element $dx$.

The barotropicity assumption implies that the specific equilibrium internal energy in $\Delta V$ depends on the overall mass density $\sum\rho^{\alpha}/N$. 
Then the equilibrium internal energy in $\Delta V$ is
\begin{equation}
    \label{e:E2}
    \sum\rho^{\alpha}\mathcal{U}\Big(\sum\rho^{\alpha}/N\Big)\, dx. 
\end{equation}

\subsubsection{Capillary energy}
Let $\kappa\ge0$ be a  constant. 
The capillary energy is defined as 
\begin{equation}
    \label{e:E3}
    \kappa\by{1}{2}\Big\langle\nabla\sum\rho^{\alpha}, \nabla\sum\rho^{\alpha}/N\Big\rangle \,dx
\end{equation}
which depends, again, on the overall mass density. This form coincides with the capillary contribution to the Helmholtz free energy \cite[Equ.~(5)]{AMW97}. 

\subsubsection{Non-equilibrium expansion energy} 
Let $\nu\ge0$ be a constant. Assume temporarily that $\Delta V_x$ is a box which is aligned along Cartesian coordinates $e_1$, $e_2$, $e_3$ and that the flow has only velocity components pointing in the direction of the $e_1$-axis. Let $L$ denote the set of labels $\alpha$ such that the corresponding subvolumes $\Delta V^{\alpha}$ constitute the left wall of $\Delta V_x$ while $R$ denotes those which correspond to the right wall (viewed along $e_1$) of the volume. Now, if 
\begin{equation}
    \frac{\sum_{\alpha\in L}\rho^{\alpha} v^{\alpha}}{\sum_{\alpha\in L}\rho^{\alpha}}
    -
    \frac{\sum_{\alpha\in R}\rho^{\alpha} v^{\alpha}}{\sum_{\alpha\in R}\rho^{\alpha}}    
\end{equation}
is greater than $0$, then particles moving into $\Delta V$ are faster than those moving out, and the corresponding energy difference should contribute to the (non-equilibrium) internal energy of the system.  Since this expression is proportional to minus the divergence of the barycentric velocity, we propose a non-equilibrium internal energy expression 
\begin{equation}
    \label{e:E4}
    -\nu \sum\rho^{\alpha}\textup{div}\Big(\frac{\sum_{\alpha}\rho^{\alpha} v^{\alpha}}{\sum_{\alpha}\rho^{\alpha}} \Big)
    \revise{dx}.
\end{equation}

\subsubsection{Stochastic energy}
The energy due to molecular bombardment along a vector field $\xi_r$ \revise{in $\Delta V_x^{\alpha}\approx dx$} is
\begin{equation}
    \label{e:E5}
    \vv<\rho^{\alpha} v^{\alpha},\xi_r>\ds W^r \revise{dx}.
\end{equation}
See \cite{LCO08, HR12, Holm15, H17, H18}.
This stochastic perturbation corresponds to individual molecules imparting their velocities, namely $\xi_r$, on the macroscopic fluid element in $\Delta V^{\alpha}$. Since individual molecules are incompressible the $\xi_r$ are assumed to be divergence free.  

\subsubsection{Total energy}
\revise{
The above subdivision formulation implies that the total configuration space is of the form $\Pi_{\alpha=1}^N(\textup{Diff}(\Delta V^{\alpha})\circledS\F(\Delta V^{\alpha})))$.
But now the $\Delta V^{\alpha}$, which are all identical by assumption, are identified with the infinitesimal element $dx$ in $M$. Thus each index $\alpha$ corresponds to a copy of $M$, and this can be done since the for the state variables, velocity and density, it does not make a difference whether these are regarded at the $\Delta V$ or at the $\Delta V^{\alpha}$ level.
Therefore, letting the position $x\in M$ range over the full domain, the total configuration space is 
$\Pi_{\alpha=1}^N(\textup{Diff}(M)\circledS\F(M)) = (\textup{Diff}(M)\circledS\F(M))^N$. 
}

Let us switch from velocity and density to momentum \revise{(density)}, $\mu^{\alpha} = \rho^{\alpha}v^{\alpha}$, and density, $\rho^{\alpha}$, as state variables. 
\revise{The phase space is thus
$
 T^*(\textup{Diff}(M)\circledS\F(M))^N 
 = ( T^*(\textup{Diff}(M)\circledS\F(M)) )^N
$.
}
We use the Euclidean metric to identify \revise{each copy in} the phase space, which is the regular dual, as 
\revise{
\[
 T^*(\textup{Diff}(M)\circledS\F(M)) 
 = T(\textup{Diff}(M)\circledS\F(M))
 = (\textup{Diff}(M)\circledS\F(M))\times(\X(M)\circledS\mathcal{F}(M))
\]
where the last identification follows from right-multiplication in the semi-direct product group, see Section~\ref{sec:SD}.
}
The resulting Hamiltonian of the IPS will therefore be a function
\begin{equation}
\label{e:HSys}
    H^N:
    \Big( T(\textup{Diff}(M)\circledS\F(M)) \Big)^N \to \R.
\end{equation}
For 
\begin{equation}
\label{e:phase}
    \Gamma
    =
    \Big( \Phi^{\alpha},f^{\alpha}; \mu^{\alpha},\rho^{\alpha}\Big)_{\alpha=1}^N
    \in \Big( T(\textup{Diff}(M)\circledS\F(M)) \Big)^N 
\end{equation}
the total Hamiltonian is the sum of \eqref{e:E1}, \eqref{e:E2}, \eqref{e:E3}, \eqref{e:E4} and \eqref{e:E5}, and given 
in semi-martingale notation as
\begin{align}
    \label{e:H^N}
    H^N(\Gamma)
    &=
    \frac{1}{2}\int_M \Big\langle 
        \sum_{\alpha} \mu^{\alpha} , 
            \frac{ \sum_{\beta}\mu^{\beta} }
                {\sum_{\gamma}\rho^{\gamma}}
            \Big\rangle \,dx \ds t
    \\
\notag 
    &\phantom{==}
    +  \int_M \sum_{\alpha}\rho^{\alpha} \mathcal{U}\Big(\frac{\sum_{\beta}\rho^{\beta}}{N}\Big)\,dx
    \ds t \\
\notag 
    &\phantom{==}
    + \kappa\int_M
    \Big\langle \sum_{\alpha}\nabla\rho^{\alpha}, \sum_{\alpha}\nabla\rho^{\alpha}/N \Big\rangle \,dx \ds t
\\
\notag 
    &\phantom{==}
    - \nu \int_M \sum_{\alpha}\rho^{\alpha}                  \textup{div}\Big(\frac{\sum_{\beta}\mu^{\beta}}{\sum_{\gamma}\rho^{\gamma}}\Big) 
    \,dx
    \ds t
    \\
\notag 
    &\phantom{==}
    + \eps\int_{M}\sum_{j,\alpha}\vv<\mu^{\alpha},\xi_r>\,dx \ds W^{r,\alpha}
\end{align}
where $W^{r,\alpha}$ are pairwise independent Brownian motions such that $[W^r,W^s]_t=\delta_{r,s}t$, the solenoidal vector fields $\xi_r\in\X_0(M)$ are fixed and $\eps\ge 0$ is a constant.  
The Hamiltonian is right-invariant by construction as it does not depend on $(\Phi^{\alpha},f^{\alpha})\in \textup{Diff}(M)\circledS\F(M)$.

\subsection{HIPS equations of motion}\label{sec:3b}
\revise{
The phase space \eqref{e:HSys} is an $N$-fold direct product of a tangent bundle identified with its dual. It carries therefore the corresponding direct product canonical symplectic form.  Since the Hamiltonian $H^N$ does not depend on $(\Phi^{\alpha},f^{\alpha})_{\alpha=1}^N\in (\textup{Diff}(M)\circledS\F(M))^N$
we can pass via Lie-Poisson reduction to the phase space $( \X(M)\circledS\mathcal{F}(M) )^N$.

The Hamiltonian IPS equations of motion follow therefore from the variational derivatives (again, in the semi-martingale notation)
}
\begin{align}
\label{e:hips1}
    \frac{\delta H^N}{\delta\mu^{\alpha}}
    &= 
    \frac{ \sum_{\beta}\mu^{\beta} }{ \sum_{\gamma}\rho^{\gamma} }\ds t
    + \nu\nabla\log\sum_{\beta}\rho^{\beta}\ds t 
    + \eps\sum_r \xi_r\ds W^{r,\alpha} \\
\notag
    &= 
    u^N\ds t
    + \nu\nabla\log\rho^N \ds t 
    + \eps \sum_r \xi_r\ds W^{r,\alpha} \\
\label{e:hips2}
    \frac{\delta H^N}{\delta\rho^{\alpha}} 
    &= 
    \Big(- \frac{1}{2}\Big\langle 
        \frac{\sum_{\beta} \mu^{\beta}}{{\sum_{\gamma}\rho^{\gamma}}} , 
            \frac{ \sum_{\beta}\mu^{\beta} }{\sum_{\gamma}\rho^{\gamma}}
            \Big\rangle
            + \mathcal{U}\Big(\frac{\sum_{\beta}\rho^{\beta}}{N}\Big) 
            + \frac{\sum_{\beta}\rho^{\beta}}{N}\mathcal{U}'\Big(\frac{\sum_{\beta}\rho^{\beta}}{N}\Big) \\
\notag 
    &\phantom{==} 
    - \kappa \Delta \frac{\sum_{\beta}\rho^{\beta}}{N}
    - \nu \frac{1}{\sum_{\beta}\rho^{\beta}}\textup{div}\Big(\sum_{\beta}\mu^{\beta}\Big)
        \Big)\ds t\\
\notag 
    &= 
    \Big(- \frac{1}{2}\vv<u^N,u^N>
            + \mathcal{U}(\rho^N) 
            + \rho^N\mathcal{U}'(\rho^N) 
            - \kappa\Delta \rho^N
            - \nu \frac{1}{\rho^N}\textup{div}(\mu^N)
        \Big)\ds t
\end{align}
\revise{
by using the $N$-fold product of the semi-direct product structure (\cite{MRW84}). Here
}
the  abbreviations
$\mu^N  
:= \frac{\sum_{\beta}\mu^{\beta}}{N}$,
$\rho^N
:= \frac{\sum_{\beta}\rho^{\beta}}{N}$
and
$u^N
:= \frac{\mu^N}{\rho^N}$
are used.

\revise{
\begin{remark}
The variational derivatives, $\delta H^N/\delta\mu^{\alpha}$ and $\delta H^N/\delta\rho^{\alpha}$, also depend on $\mu^{\beta}$ and $\rho^{\beta}$ with $\beta\neq\alpha$. Hence the right hand sides in the equations~\eqref{e:hips3}-\eqref{e:hips5} below cannot be viewed as vector fields on $\X(M)\circledS\mathcal{F}(M)$, but only as Cartesian projections of vector fields on the full space $( \X(M)\circledS\mathcal{F}(M) )^N$.
\end{remark}
}

\begin{remark}
Equations~\eqref{e:hips1} and \eqref{e:hips2} depend only on the empirical averages $\mu^N$ and $\rho^N$. 
\end{remark}

\begin{remark}
The quantity $u^N$ is the specific momentum (viewed as a vector field) of the ensemble average. But it is (for non-constant density) not equal to the empirical average of velocities, that is 
$
    u^N
    \neq 
    \frac{\sum \mu^{\alpha}/\rho^{\alpha}}{N}
$.
\end{remark}
The stochastic Hamilton equation associated to \eqref{e:H^N} for 
$(\Phi^{\alpha}, \mu^{\alpha},\rho^{\alpha})$ are:
\begin{align}
\label{e:hips3}
    \ds \Phi^{\alpha}_t
    &=
    \Big(\frac{\delta H^N}{\delta\mu^{\alpha}}\Big)\circ\Phi_t^{\alpha} \\
\notag
    &= 
    \Big(u^N
    + \nu\nabla\log\rho^N\Big)\circ\Phi_t^{\alpha} \ds t 
    + \eps\sum_r \xi_r\circ\Phi_t^{\alpha} \ds W^{r,\alpha} \\
\label{e:hips4}
    \ds\mu^{\alpha}_t
    &=
    - \ad\Big(
        (\delta_t\Phi_t^{\alpha})\circ(\Phi_t^{\alpha})^{-1}
        \Big)^{\top}
        \mu^{\alpha}_t
    -  \frac{\delta H^N}{\delta\rho^{\alpha}} \diamond \rho^{\alpha}_t \\
\notag 
    &=
    -\nabla_{\delta H^N/\delta\mu^{\alpha}}\mu^{\alpha} 
    - \textup{div}(\delta H^N/\delta\mu^{\alpha})\mu^{\alpha} 
    - (\nabla^{\top}\delta H^N/\delta\mu^{\alpha})\mu^{\alpha} 
    -\rho^{\alpha}\nabla \frac{\delta H^N}{\delta\rho^{\alpha}} \\
\notag 
    &=
    \Big(-\nabla_{u^N + \nu\nabla\log\rho^N}\mu^{\alpha} - \textup{div}(u^N+\nu\nabla\log\rho^N)\mu^{\alpha} - (\nabla^{\top}u^N + \nu\nabla^{\top}\nabla\log\rho^N)\mu^{\alpha} \\
\notag 
    &\phantom{==}    
    -\rho^{\alpha}
        \Big(
        -(\nabla^{\top}u^N)u^N 
        + (\rho^N)^{-1} \nabla\Big( (\rho^N)^2\mathcal{U}'(\rho^N) \Big)
        - \kappa\nabla\Delta\rho^N
    \\
\notag 
    &\phantom{====}
        +\nu(\rho^N)^{-2}\textup{div}(\mu^N)\nabla\rho^N
        -\nu(\rho^N)^{-1}\nabla\textup{div}(\mu^N)
        \Big)
    \Big)\ds t\\
\notag 
    &\phantom{==}
    - \eps\sum_r \ad\Big(\xi_r\Big)^{\top}\mu^{\alpha}\ds W^{r,\alpha}  
    \\
\label{e:hips5}
    \ds\rho_t
    &= 
     - L_{(\delta_t\Phi_t^{\alpha})\circ(\Phi_t^{\alpha})^{-1} }\rho^{\alpha}_t \\
\notag
    &= -\textup{div}\Big(
        \rho^{\alpha} u^N + \nu\rho^{\alpha}\nabla\log\rho^N
        \Big)\ds t
       - \eps\sum_r \textup{div}\Big(\rho^{\alpha} \xi_r\Big) \ds W^{r,\alpha}
\end{align}
Here $\rho^{\alpha}$ is viewed as a density whence $L$ is the Lie derivative of a density, not of a function.
The momentum variable is identified, via the Euclidean metric, as an element
\begin{equation}
    \mu^{\alpha}\in\X(M)
\end{equation}
whence the transpose Lie derivative $L^{\top}$ is used instead of $L^*$.  
The diamond notation in \eqref{e:hips4} is defined by 
$f\diamond\rho = \rho\nabla f$ and this term arises because of the semi-direct product structure.

\subsection{Mean field limit}\label{sec:3c}
Assume the mean field limit of \eqref{e:hips3}-\eqref{e:hips5} exists, for $N\to\infty$. Since all subvolumes $\Delta V^{\alpha}$ and their enclosed fluid elements are identical it suffices to consider $\alpha=1$,
\begin{equation}
\label{e:mflimit}
    (\Phi_t^1,f_t^1,\mu_t^1,\rho_t^1)\longto (\Phi_t,f_t,\mu_t,\rho_t)
\end{equation}
as $N\to\infty$. 
Hence
\begin{equation}
    \mu^N \longto E[\mu] =: \bar{\mu} 
    \quad\textup{and}\quad
    \rho^N\longto E[\rho] =: \bar{\rho}  
\end{equation}
and
\begin{equation}
\label{e:mf1}
    u^N
    =
    \frac{ \mu^N }{ \rho^N }
    \longto 
    \bar{\mu}/\bar{\rho} 
    =: u
\end{equation}
as $N\to\infty$.

Note that $\mu$ and $\rho$ are stochastic processes while $u$ is deterministic.
The mean field limit equations of motion for $\mu$ and $\rho$ are
\begin{align}
\label{e:Hmf1}
    \ds\mu 
    &=
    \Big(
        -\nabla_{u+\nu\nabla\log\bar{\rho}}\mu 
        - \textup{div}(u+\nu\nabla\log\bar{\rho})\mu 
        - (\nabla^{\top}u)\mu
        - \nu(\nabla^{\top}\nabla\log\bar{\rho})\mu 
    \\
\notag
    &\phantom{==}
        + \rho (\nabla^{\top}u) u
        - \rho \bar{\rho}^{-1} \nabla\Big( \bar{\rho}^2\mathcal{U}'(\bar{\rho}) 
    \Big) 
    + \kappa\rho\nabla\Delta\bar{\rho}
\\
\notag
    &\phantom{==}
    - \nu\rho \bar{\rho}^{-2}\textup{div}(\bar{\mu})\nabla\bar{\rho}
    + \nu\rho \bar{\rho}^{-1}\nabla\textup{div}(\bar{\mu})
    \Big)  \ds t
 \\
\notag
    &\phantom{==}
    - \eps\sum_r \ad(\xi_r)^{\top}\mu\ds W^r  
    \\
\label{e:Hmf2}
    \ds\rho_t
    &= -\textup{div}(\rho u + \nu\rho\nabla\log\bar{\rho})\ds t
       - \eps\sum_r \textup{div}(\rho \xi_r) \ds W^r
\end{align}

These equations are linear in $\mu$ and $\rho$, and depend otherwise on the mean fields $E[\mu] = \bar{\mu}$ and $E[\rho] = \bar{\rho}$. 
Let $p := \bar{\rho}^2\mathcal{U}'(\bar{\rho})$. Using that $\bar{\mu} = \bar{\rho}u$, the equations for the expectations $\bar{\mu}$, $\bar{\rho}$ are therefore
\begin{align}
\label{e:exp1}
    \dot{\bar{\mu}}
    &=
    -\nabla_{u+\nu\nabla\log\bar{\rho}}\bar{\mu} 
    - \textup{div}(u+\nu\nabla\log\bar{\rho})\bar{\mu}
    - \nu(\nabla^{\top}\nabla\log\bar{\rho})\bar{\mu}
    \\
\notag
&\phantom{==}
    - \nabla p
    + \kappa\bar{\rho}\nabla\Delta\bar{\rho}
    - \nu\bar{\rho}^{-1}\textup{div}(\bar{\mu})\nabla\bar{\rho}
    + \nu\nabla\textup{div}(\bar{\mu}) 
    + \frac{\eps^2}{2}\sum_r L^{\top}_{\xi_r} L^{\top}_{\xi_r}\bar{\mu} 
    \\
    \label{e:exp2}
    \dot{\bar{\rho}}
    &= -\textup{div}\Big(\bar{\rho} u + \nu\nabla\bar{\rho}\Big) 
       + \frac{\eps^2}{2}\sum_r \textup{div}\Big( \textup{div}(\bar{\rho} \xi_r) \xi_r \Big)   
\end{align}

\subsection{Barotropic Navier-Stokes equation}\label{sec:3d}
Assume that the perturbation vector fields $\xi_r$ are given by $\zeta_r$, defined in Section~\ref{sec:BMgu} and that $\eps^2 c^s/2 = \nu$. 
Then \eqref{e:Delta} implies 
\begin{equation}
 \frac{\eps^2}{2}\sum_r L^{\top}_{\zeta_r} L^{\top}_{\zeta_r}\bar{\mu} = \nu\Delta\bar{\mu} 
 \;\textup{ and }\;
 \frac{\eps^2}{2}\sum_r \textup{div}\Big( \textup{div}(\bar{\rho} \zeta_r) \zeta_r \Big)  
 = \nu \Delta\bar{\rho}. 
\end{equation}
(The explicit calculation is carried out in \cite[Lemma~4.3]{H18}.)

Hence equations~\eqref{e:exp1} and \eqref{e:exp2} become
\begin{align} 
\label{e:exp11}
    \dot{\bar{\mu}}
    &=
    -\nabla_{u+\nu\nabla\log\bar{\rho}}\bar{\mu} 
    - \textup{div}(u+\nu\nabla\log\bar{\rho})\bar{\mu} 
    - \nu(\nabla^{\top}\nabla\log\bar{\rho})\bar{\mu}\\
\notag
&\phantom{==}
    - \nabla p
    + \kappa\bar{\rho}\nabla\Delta\bar{\rho}
    - \nu\bar{\rho}^{-1}\textup{div}(\bar{\mu})\nabla\bar{\rho}
    + \nu\nabla\textup{div}(\bar{\mu}) 
    + \nu\Delta\bar{\mu} \\
\label{e:exp22}
    \dot{\bar{\rho}}
    &= -\textup{div}(\bar{\rho} u)
\end{align}
Therefore,
\begin{align*}
    \dot{\bar{\mu}}
    &= \dd{t}{}(\bar{\rho}u)
    = -\textup{div}(\bar{\rho}u)u + \bar{\rho}\dot{u}\\
    &=
    -\bar{\rho}\nabla_{u+\nu\nabla\log\bar{\rho}}u
    - \textup{div}(\bar{\rho}u)u 
    - \nu\textup{div}(\bar{\rho}\nabla\log\bar{\rho})u \\
    &\phantom{==}
    - \nu(\nabla^{\top}\nabla\log\bar{\rho})\bar{\rho}u 
    - \nabla p
    + \kappa\bar{\rho}\nabla\Delta\bar{\rho}
    - \nu\bar{\rho}^{-1}\vv<\nabla\bar{\rho},u>\nabla\bar{\rho} 
    - \nu\textup{div}(u)\nabla\bar{\rho} \\
    &\phantom{==}
    + \nu\nabla\Big(\vv<\nabla\bar{\rho},u> + \bar{\rho}u\Big)
    + \nu\Delta(\bar{\rho}u) \\
    &=
    -\bar{\rho}\nabla_{u+\nu\nabla\log\bar{\rho}}u
    - \textup{div}(\bar{\rho}u)u 
    - \nu(\Delta\bar{\rho})u \\
    &\phantom{==}
    - \nu\nabla_u\nabla\bar{\rho} 
    + \nu\bar{\rho}^{-1}\vv<u,\nabla\bar{\rho}>\nabla\bar{\rho} 
    - \nabla p 
    + \kappa\bar{\rho}\nabla\Delta\bar{\rho}\\
    &\phantom{==}
    - \nu\bar{\rho}^{-1}\vv<\nabla\bar{\rho},u>\nabla\bar{\rho} 
    - \nu\textup{div}(u)\nabla\bar{\rho} 
    + \nu(\nabla^{\top}\nabla\bar{\rho}) u 
    + \nu (\nabla^{\top}u) \nabla\bar{\rho} \\
    &\phantom{==}
    + \nu \textup{div}(u)\nabla\bar{\rho} 
    + \nu \bar{\rho}\nabla\textup{div}(u) 
    + \nu(\Delta\bar{\rho})u\\
    &\phantom{==}
    + 2\nu\nabla_{\nabla\bar{\rho}}u
    + \nu\bar{\rho}\Delta u \\
    &=
    - \bar{\rho}\nabla_{u}u
    - \textup{div}(\bar{\rho}u)u 
    - \nabla p 
    + \kappa\bar{\rho}\nabla\Delta\bar{\rho}
    + \nu (\nabla^{\top}u)\nabla\bar{\rho} 
    + \nu \bar{\rho}\nabla\textup{div}(u)
    + \nu \nabla_{ \nabla\bar{\rho} }u
    + \nu\bar{\rho}\Delta u
\end{align*}
Define the stress tensor, $S$, by 
\begin{equation}
    \label{e:stress}
    S_{ij}
    = \nu\bar{\rho}\Big(\del_i u^j + \del_j u^i\Big)
\end{equation}
and the corresponding force  
\begin{align}
    \textup{div}\,S
    = \sum \del_i S_{ij} e_j 
    = 
    \nu (\nabla^{\top}u)\nabla\bar{\rho} 
    + \nu \bar{\rho}\nabla\textup{div}(u)
    + \nu \nabla_{ \nabla\bar{\rho} }u
    + \nu\bar{\rho}\Delta u.
\end{align}
The capillary tensor, $C$, is defined (see \cite[Equ.~(4)]{AMW97}) by 
\begin{equation}
    \label{e:Cap}
    C 
    = \kappa\Big(
        \Big( \bar{\rho}\Delta\bar{\rho}
            +\by{1}{2}\vv<\nabla\bar{\rho},\nabla\bar{\rho}> \Big)\mathbb{I}
        - \nabla\bar{\rho}\otimes\nabla\bar{\rho} \Big)
\end{equation}
and satisfies $\textup{div}\,C = \kappa\bar{\rho}\nabla\Delta\bar{\rho}$.

Note that $\nu\ge0$ and $\kappa\ge0$ are constants, and that $\bar{\mu}$, resp.\ $\bar{\rho}$ are a time dependent vector field, resp. function by construction.
It follows that: 

\begin{theorem}\label{thm:cNS}
The mean field equations \eqref{e:Hmf1} and \eqref{e:Hmf2} imply, if $\xi_r=\zeta_r$ and $c^s\eps^2/2=\nu$, that the expectations $\bar{\mu}=E[\mu]$ and $\bar{\rho}=E[\rho]$ satisfy the compressible Navier-Stokes equations
\begin{align}
\label{e:bNS1}
    \dot{u}
    &=   
    - \nabla_{u}u
    - \bar{\rho}^{-1}\nabla p
    + \bar{\rho}^{-1} \Big(\textup{div}\,S  + \textup{div}\,C\Big) \\
\label{e:bNS2}
    \dot{\bar{\rho}}
    &= 
    -\textup{div}(\bar{\rho}u)
\end{align}
where $u=\bar{\mu}/\bar{\rho}$. 
\end{theorem}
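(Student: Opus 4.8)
The plan is to obtain the mean-field equations \eqref{e:exp11}--\eqref{e:exp22} by taking expectations in the linear Stratonovich system \eqref{e:Hmf1}--\eqref{e:Hmf2}, and then to recast the resulting momentum balance into the hydrodynamic form \eqref{e:bNS1} by means of the identity $\bar\mu=\bar\rho u$ together with the continuity equation. The feature that makes this feasible is that \eqref{e:Hmf1}--\eqref{e:Hmf2} are \emph{linear} in the stochastic fields $\mu$ and $\rho$, with every remaining coefficient assembled from the deterministic mean fields $\bar\mu$, $\bar\rho$ and $u=\bar\mu/\bar\rho$. Consequently $E$ commutes with all drift coefficients, so the system closes at the level of expectations without any further moment assumptions.

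For the expectation step I would proceed as follows. First convert the Stratonovich noise to It\^o form: the genuinely stochastic integral has zero expectation, while the Stratonovich correction of $-\eps\sum_r\ad(\xi_r)^{\top}\mu\,\ds W^r=\eps\sum_r L^{\top}_{\xi_r}\mu\,\ds W^r$ produces the drift $\tfrac{\eps^2}{2}\sum_r L^{\top}_{\xi_r}L^{\top}_{\xi_r}\mu$, and correspondingly the density noise contributes $\tfrac{\eps^2}{2}\sum_r\textup{div}(\textup{div}(\bar\rho\xi_r)\xi_r)$. Applying $E$ to the drift then replaces $\mu$ by $\bar\mu$ and $\rho$ by $\bar\rho$; here one uses $\bar\mu=\bar\rho u$ to cancel the pair $-(\nabla^{\top}u)\bar\mu$ and $+\bar\rho(\nabla^{\top}u)u$ and to reduce $-\rho\bar\rho^{-1}\nabla(\bar\rho^{2}\mathcal U'(\bar\rho))$ to the pressure gradient $-\nabla p$. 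This delivers \eqref{e:exp1}--\eqref{e:exp2}. Finally, setting $\xi_r=\zeta_r$ and invoking \eqref{e:Delta} with the calibration $c^s\eps^2/2=\nu$ collapses both second-order sums into $\nu\Delta\bar\mu$ and $\nu\Delta\bar\rho$, giving \eqref{e:exp11}--\eqref{e:exp22}.

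The remaining work is algebraic. Differentiating $\bar\mu=\bar\rho u$ and using \eqref{e:exp22} gives $\dot{\bar\mu}=-\textup{div}(\bar\rho u)\,u+\bar\rho\,\dot u$, which I would substitute into \eqref{e:exp11} to isolate $\bar\rho\,\dot u$. Expanding every viscous contribution through $\nabla\log\bar\rho=\bar\rho^{-1}\nabla\bar\rho$ and the Leibniz rules for $\textup{div}(\bar\rho u)$ and $\Delta(\bar\rho u)=(\Delta\bar\rho)u+2\nabla_{\nabla\bar\rho}u+\bar\rho\Delta u$, one collects the surviving second-order terms into the four pieces $\nu(\nabla^{\top}u)\nabla\bar\rho+\nu\bar\rho\nabla\textup{div}(u)+\nu\nabla_{\nabla\bar\rho}u+\nu\bar\rho\Delta u$ that constitute $\textup{div}\,S$ for the stress tensor \eqref{e:stress}, isolates $-\nabla p$, and recognizes $\kappa\bar\rho\nabla\Delta\bar\rho=\textup{div}\,C$ from \eqref{e:Cap}. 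Dividing by $\bar\rho$ yields \eqref{e:bNS1}, and \eqref{e:exp22} is already \eqref{e:bNS2}.

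I expect the main obstacle to be precisely this last bookkeeping, specifically the interplay between the viscous advection along $\nu\nabla\log\bar\rho$ in \eqref{e:exp11} and the expansion of $\nu\Delta\bar\mu=\nu\Delta(\bar\rho u)$: the cross terms $\nu(\Delta\bar\rho)u$ and the mixed gradient terms must cancel against the contributions $-\nu(\nabla^{\top}\nabla\log\bar\rho)\bar\mu$ and $-\nu\bar\rho^{-1}\textup{div}(\bar\mu)\nabla\bar\rho$ before the symmetric velocity-gradient combination defining $S$ can emerge. Confirming that what remains is exactly $\textup{div}\,S$, and separately verifying the stated identity $\textup{div}\,C=\kappa\bar\rho\nabla\Delta\bar\rho$, is the delicate core of the argument; everything else reduces to matching terms.
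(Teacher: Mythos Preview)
Your proposal is correct and follows essentially the same route as the paper: take expectations in the linear Stratonovich system to obtain \eqref{e:exp1}--\eqref{e:exp2}, specialize to $\xi_r=\zeta_r$ with $c^s\eps^2/2=\nu$ so that the second-order sums collapse to $\nu\Delta\bar\mu$ and $\nu\Delta\bar\rho$, and then perform exactly the algebraic expansion you describe (differentiate $\bar\mu=\bar\rho u$, substitute, expand $\Delta(\bar\rho u)$ and the $\nabla\log\bar\rho$ terms, and collect what survives into $\textup{div}\,S$ and $\textup{div}\,C$). Your identification of the cancellations is accurate, and you are in fact more explicit than the paper about the Stratonovich-to-It\^o step underlying the passage to \eqref{e:exp1}--\eqref{e:exp2}.
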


\section{Stochastic Kelvin Circulation Theorem}\label{sec:KCT}
\revise{
In \cite{DH19} it is shown that stochastic Euler-Poincar\'e fluid equations are characterized by preserving circulation along Lagrangian paths. 
}
Since \eqref{e:Hmf1}-\eqref{e:Hmf2} are obtained as a mean field limit of a Hamiltonian IPS, 
\revise{
and can be viewed of as mean field generalization of the stochastic fluid system in \cite{DH19}, 
}
there should be a Kelvin Circulation Theorem:

\begin{proposition}\label{prop:KCT}
Let $C$ be a smooth closed loop which is transported by the Lagrangian flow $\Phi_t$, defined through the mean field limit \eqref{e:mflimit} and characterized by 
\begin{equation}
    \ds\Phi_t\circ\Phi_t^{-1}
    = \Big(u_t+\nu\nabla\log\bar{\rho}\Big)\ds t + \eps\sum \xi_r\ds W^r_t.
\end{equation}
Let $\mu_t$ and $\rho_t$ be solutions of \eqref{e:Hmf1} and \eqref{e:Hmf2}. Then
\begin{equation}\label{e:KCT}
    \ds\int_{(\Phi_t)_*C}\rho_t^{-1}\mu_t^{\flat} = 0    
\end{equation}
where $\flat$ is the Euclidean isomorphism to one-forms (since $\mu$ is treated as a vector field). 
\end{proposition}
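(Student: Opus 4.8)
The plan is to show that, along the stochastic Lagrangian flow $\Phi_t$, the circulation one-form $\theta_t := \rho_t^{-1}\mu_t^{\flat}$ is transported up to an exact differential. The key structural advantage is that every differential in sight is a Stratonovich differential, so the ordinary rules of calculus — the Leibniz rule, and the transport (Lie-derivative) theorem for pullbacks of forms by a flow — hold with no Itô correction. Writing $v_t$ for the Stratonovich generator, $\ds\Phi_t\circ\Phi_t^{-1} = v_t$ with $v_t = (u_t+\nu\nabla\log\bar\rho)\ds t + \eps\sum_r\xi_r\ds W_t^r$, I would first invoke the Stratonovich transport theorem in the form
\[
 \ds\int_{(\Phi_t)_*C}\theta_t = \int_{(\Phi_t)_*C}\Big(\ds\theta_t + L_{v_t}\theta_t\Big),
\]
where $L_{v_t} = L_{u+\nu\nabla\log\bar\rho}\,\ds t + \eps\sum_r L_{\xi_r}\,\ds W_t^r$ is the stochastic Lie derivative. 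The problem then reduces to proving that $\ds\theta_t + L_{v_t}\theta_t$ is an exact one-form, since the integral of $dg$ over the closed loop $(\Phi_t)_*C$ vanishes.

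Next I would read off the transport law of each state variable. Equation~\eqref{e:Hmf2} is literally $\ds\rho_t + L_{v_t}\rho_t = 0$ with $L_{v_t}$ the Lie derivative of a density, so $\rho_t\,dx$ is advected by $\Phi_t$. For the momentum I would pass to the one-form density $m_t := \mu_t^{\flat}\otimes dx$ and regroup the first line of \eqref{e:Hmf1}: using $\ad(X)^{\top} = -L^{\top}_X$ and the coordinate formula for $L^{\top}$, the deterministic advective terms combine into $L^{\top}_{u+\nu\nabla\log\bar\rho}$ and the stochastic ones into $\eps\sum_r L^{\top}_{\xi_r}$, and under the metric identification each transpose is minus the Lie derivative of $m_t$. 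What survives is a purely deterministic force, so that
\[
 \ds m_t + L_{v_t}m_t = F_t^{\flat}\otimes dx\;\ds t,
\]
with $F_t$ the vector field formed by lines two through four of \eqref{e:Hmf1}.

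Combining the two transport laws and noting that the martingale parts of $m_t$ and of $\rho_t\,dx$ are pure advection by the \emph{same} $v_t$, the Leibniz rule for the quotient $\theta_t = m_t/(\rho_t\,dx)$ cancels all advection terms and leaves
\[
 \ds\theta_t + L_{v_t}\theta_t = \rho_t^{-1}F_t^{\flat}\;\ds t .
\]
It then remains to verify that $\rho_t^{-1}F_t$ is a gradient, which I would do term by term: $(\nabla^{\top}u)u = \tfrac12\nabla|u|^2$; the pressure term $-\bar\rho^{-1}\nabla(\bar\rho^2\mathcal{U}'(\bar\rho)) = -\nabla\big(\mathcal{U}(\bar\rho)+\bar\rho\mathcal{U}'(\bar\rho)\big)$ by the barotropic identity $\bar\rho^{-1}p'(\bar\rho) = \tfrac{d}{d\bar\rho}(\mathcal{U}+\bar\rho\mathcal{U}')$; the capillary term $\kappa\nabla\Delta\bar\rho = \nabla(\kappa\Delta\bar\rho)$; and the viscous pair $-\nu\bar\rho^{-2}\textup{div}(\bar\mu)\nabla\bar\rho + \nu\bar\rho^{-1}\nabla\textup{div}(\bar\mu) = \nu\nabla\big(\bar\rho^{-1}\textup{div}(\bar\mu)\big)$. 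Hence $\rho_t^{-1}F_t^{\flat} = dg_t$ with $g_t = \tfrac12|u|^2 - \mathcal{U}(\bar\rho) - \bar\rho\mathcal{U}'(\bar\rho) + \kappa\Delta\bar\rho + \nu\bar\rho^{-1}\textup{div}(\bar\mu)$, the integrand is exact, and \eqref{e:KCT} follows.

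The main obstacle I anticipate is bookkeeping rather than conceptual. One must keep track of the density weights hidden in the identifications $\gu^*\cong\Om^1(M)$ and $\F(M)^*\cong\F(M)$ so that $\mu$ genuinely transforms as a one-form density and $\rho$ as a density; only then is $\theta = \mu^{\flat}/\rho$ a bona fide one-form and the quotient cancellation exact. The second point requiring care is a clean statement and justification of the Stratonovich transport theorem in this infinite-dimensional diffeomorphism-group setting — it is the stochastic Lie-derivative (Kunita) formula, and it is precisely the reason the martingale terms drop out, leaving only the deterministic force whose exactness is the structural heart of the circulation theorem. This route is the concrete counterpart of the Noether-theoretic argument alluded to in the introduction.
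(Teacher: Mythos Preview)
Your proposal is correct and follows essentially the same route as the paper: both reduce the circulation identity to showing that $\ds\theta_t + L_{v_t}\theta_t$ is an exact one-form, and both identify the same potential (your $g_t$ equals the paper's $-\tilde p$). The only difference is packaging---you invoke the Stratonovich transport theorem and tensor-density bookkeeping abstractly, whereas the paper parameterizes the loop and carries out the identical computation in coordinates.
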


\begin{proof}
Equations~\eqref{e:hips3} and \eqref{e:mflimit} yield
\begin{equation}
    \label{e:Phi}
    \ds \Phi
    = \Big(u+\nu\nabla\log\bar{\rho}\Big)\circ\Phi\ds t
      + \eps\sum\xi_r\circ\Phi \ds W^r
\end{equation}

Now, \eqref{e:Hmf1} and \eqref{e:Hmf2} imply that $X_t := \rho_t^{-1}\mu_t$ satisfies
\begin{align}
    \ds X
    &=
    -\rho^{-2}(\ds\rho)\mu
        + \rho^{-1}\ds\mu 
\label{e:Xvel} \\
    &=
    \Big(
        - \nabla_{u+\nu\nabla\log\bar{\rho}}X 
        - (\nabla^{\top} u)X 
        - \nu(\nabla^{\top}\nabla\log\bar{\rho})X
        - \nabla\tilde{p}
    \Big)\ds t 
\notag \\
    &\phantom{==}
    -
    \eps\sum \Big(
        -\nabla_{\xi_r} X 
        - (\nabla^{\top} \xi_r)X 
    \Big)\ds W^r
\notag
\end{align}
with
\begin{equation}
    \tilde{p}
    :=
    -\frac{1}{2}\vv<u,u>
    +\mathcal{U}(\bar{\rho}) 
    + \bar{\rho}\mathcal{U}'(\bar{\rho})
    - \kappa\Delta\bar{\rho}
    - \nu\bar{\rho}^{-1}\textup{div}(\bar{\mu}).
\end{equation}
Hence, with parameterization $C = c([0,1])$:  
\begin{align*}
    \ds \int_{(\Phi_t)_*C} X_t^{\flat}
    &= \ds\int_C \Phi_t^* X_t^{\flat} 
    = \ds\int_C \Big( X_t^{\flat} \circ \Phi_t\Big). T\Phi_t \\
    &= \int_0^1\ds\vv< X_t\circ\Phi_t, T\Phi_t.c'(s)>\,ds \\
    &=
    \int_0^1\Big(
        \vv<(\ds X_t)\circ\Phi_t + TX_t.\delta\Phi_t, T\Phi_t.c'(s)> \\
    &\phantom{===}
        +
        \vv<X_t\circ\Phi_t, 
            T(u_t\ds t 
            + \nu\nabla\log\bar{\rho} \ds t 
            + \eps\sum\xi_r\ds W^r_t ).T\Phi_t.c'(s)>
        \Big)\,ds\\
    &= 
    \int_0^1\Big(
        \Big\langle(\ds X_t)\circ\Phi_t 
            + (\nabla_{u_t\,\delta_t t 
                + \nu\nabla\log\bar{\rho} \,\delta_t t
                + \eps\sum\xi_r\,\delta_t W^r_t } X_t) \circ\Phi_t\\
    &\phantom{===}
            + ((\nabla^{\top}u_t\ds t 
                + \nu\nabla^{\top}\nabla\log\bar{\rho} \ds t
                + \eps\sum\nabla^{\top}\xi_r\ds W^r_t )X_t)\circ\Phi_t 
        , T\Phi_t.c'(s)\Big\rangle
        \Big)\,ds\\
    &=
    \int_{(\Phi_t)_*C}
        \Big(
         \ds X_t
            + \nabla_{u_t\,\delta_t t 
                + \nu\nabla\log\bar{\rho} \,\delta_t t
                + \eps\sum\xi_r\,\delta_t W^r_t } X_t
                \\
    &\phantom{===}
            + (\nabla^{\top}u_t\,\delta_t t 
                + \nu\nabla\log\bar{\rho} \ds t
                + \eps\sum\nabla^{\top}\xi_r\,\delta_t W^r_t ) X_t
        \Big)^{\flat}
        \\
    &= 0 
\end{align*}
since, by \eqref{e:Xvel}, the integrand equals $-(\nabla\tilde{p})^{\flat}\ds t = -d\tilde{p}\ds t$. 
\end{proof}

\section{Conclusions}\label{sec:concl}

\subsection{HIPS approach to the compressible Navier-Stokes equation}
The mean field system \eqref{e:Hmf1}-\eqref{e:Hmf2} is derived from the interacting particle point of view under the basic assumption that the equations of motion follow from stochastic Hamiltonian mechanics. 
Therefore, circulation is preserved along stochastic Lagrangian paths.
If the perturbation fields $\xi_r$ run over the orthogonal system $\zeta_r$, defined in Section~\ref{sec:BMgu}, such that the stochastic perturbation is given by a cylindrical Wiener process, then the mean fields $E[\mu]$ and $E[\rho]$ solve the compressible Navier-Stokes equations.  (Theorem~\ref{thm:cNS} and Proposition~\ref{prop:KCT}.)


While the HIPS formulation relies on a system of $N$ interacting SDEs, the mean field equations \eqref{e:Hmf1}-\eqref{e:Hmf2} is a single SDE system for momentum and mass density. In contrast to statistical mechanics, this mean field formulation is obtained without any closure assumptions. (However, in this paper the existence of the mean field limit is not proved but assumed.) 

In the mean field limit \eqref{e:mflimit}, the HIPS evolution equation \eqref{e:hips3} becomes
\begin{equation}
\label{e:mf_transp}
    (\ds \Phi_t)\circ\Phi_t^{-1}
    = \Big( E[\mu_t]/E[\rho_t]+\nu\nabla\log E[\rho_t] \Big)\ds t
      + \eps \sum \xi_r\ds W_t^r
\end{equation}
which is of similar form as the LA SALT advection field 
\begin{equation}
\label{e:la_salt}
    E[u_t^L]\ds t + \sum  \xi_r\ds W_t^r
\end{equation}
of \cite{DHL20,ABHT20}, where $u_t^L$ is a stochastic velocity field. 
However, there are a few crucial differences: while \eqref{e:la_salt} is the starting point for LA SALT theory, the HIPS formulation is based on the ensemble Hamiltonian \eqref{e:H^N} and the Lie transport along \eqref{e:mf_transp} in the mean field equations \eqref{e:Hmf1}-\eqref{e:Hmf2} is a consequence of the Hamiltonian structure of the IPS (and the ensuing passage to the mean field limit). Moreover, unless the density is constant, it is not clear how to identify the drift in \eqref{e:mf_transp} with the expectation of a velocity.   
Thus, both, the starting points and the advection fields are different. However, the perturbation fields $\xi_r$ can be interpreted in the same manner.


\subsection{NatCat modeling of windstorm events}
\revise{NatCat models used to calculate the solvency capital requirement (SCR as defined in \cite{Level1}) for storm risks rely on NWP systems. 
These NWP systems are deterministic and, to arrive at a set of `stochastic' NatCat scenarios, the initial conditions are statistically sampled. 
As discussed in the Introduction, all such numerical schemes suffer from subgrid phenomena, and for geophysical flow models a well-established means for treating these deficiencies is by stochastic fluid mechanics (\cite{ABHT20,Betal17,Holm15,Mem14,RMC17}).
Since SCR calculation is concerned with predicting extreme events in the $99.5$ percentile, and not only average storm patters, it seems reasonable to expect that also NatCat models would benefit from a stochastic dynamics approach.  

However, it is not claimed that the stochastic HIPS formulation of this paper is appropriate to generate stochastic NatCat storm scenarios. 
}

\end{document}